\def\BibTeX{{\rm B\kern-.05em{\sc i\kern-.025em b}\kern-.08em
    T\kern-.1667em\lower.7ex\hbox{E}\kern-.125emX}}
\theoremstyle{definition}
\newtheorem{Def}{Definition}
\theoremstyle{definition}
\newtheorem{The}{Theorem}
\newtheorem{Pro}{Proposition}
\theoremstyle{definition}
\newtheorem{Lem}{Lemma}
\theoremstyle{definition}
\theoremstyle{remark}
\theoremstyle{remark}
\newtheorem{Ex}{Example}
\theoremstyle{remark}
\DeclareMathAlphabet{\mathscr}{LS1}{stix2scr}{m}{n}
\newcommand{\err}{\ensuremath{ \mathscr{r} }}
\renewcommand{\i}{\mathbf{i}}
\renewcommand{\j}{\mathbf{j}}
\renewcommand{\k}{\mathbf{k}}
\newcommand{\ZZ}{\mathbb{Z}}
\newcommand{\HH}{\mathbb{H}}
\newcommand{\HHd}{\mathbb{H}[d]}
\newcommand{\HHn}{\mathbb{H}^{n}}
\newcommand{\HHnn}{\mathbb{H}^{n\times n}}
\newcommand{\HHnm}{\mathbb{H}^{n\times m}}
\newcommand{\HHggd}[2]{\mathbb{H}^{#1\times #2}[d]}
\newcommand*{\eqand}{\ensuremath{\quad\textrm{ and }\quad}}
\DeclareMathOperator{\gcld}{gcld}
\newcommand{\zi}{z^{-1}}
\theoremstyle{definition}
\newtheorem{Algo}{Algoritm}
\title{\LARGE \bf
Exploring Noncommutative Polynomial Equation Methods for Discrete-Time Quaternionic Control*
}
\author{Michael Sebek$^{1}$
\thanks{*This work was co-funded by the European Union under the project ROBOPROX (reg. no. CZ$.02.01.01/00/22\_008/0004590$).}
\thanks{$^{1}$
Michael Sebek is with 
Department of Control Engineering,
Faculty of Electrical Engineering,
Czech Technical University in Prague,
Praha, Czech Republic
{\tt\small michael.sebek@fel.cvut.cz}}%
}
\newcommand{\IEEEAAMNotice}[2]{%
\noindent\begingroup\small
\hrule\vspace{4pt}
\noindent\textbf{Accepted manuscript (AAM).} Published in: #1. DOI: #2.\\[2pt]
\noindent\textcopyright{} 2025 IEEE. Personal use of this material is permitted. Permission from IEEE must be obtained for all other uses, in any current or future media, including reprinting/republishing this material for advertising or promotional purposes, creating new collective works, for resale or redistribution to servers or lists, or reuse of any copyrighted component of this work in other works.
\vspace{4pt}\hrule
\endgroup\vspace{10pt}
}
\begin{document}

\maketitle
\thispagestyle{empty}
\pagestyle{empty}

\IEEEAAMNotice{2025 IEEE 2025 33rd Mediterranean Conference on Control and Automation (MED), Tangier, Morocco,}{10.1109/MED64031.2025.11073531}
\begin{abstract}
We present new polynomial-based methods for SISO discrete-time quaternionic systems, highlighting how noncommutative multiplication modifies classical control approaches. Defining quaternionic polynomials via a backward-shift operator, we examine left and right fraction representations of transfer functions, showing that right zeros correspond to similarity classes of quaternionic matrix right eigenvalues. We then propose a feedback design procedure that generalizes pole placement to quaternions—a first approach using a genuine quaternionic polynomial equation. An illustrative example demonstrates its effectiveness. 
\end{abstract}

\section{INTRODUCTION}
Quaternions and quaternionic systems are gaining importance across science and engineering—especially in computer graphics, robotics, and control—because they effectively represent orientations and rotations \cite{Pereira06, MarkleyCrassidis14}. While many control systems papers treat quaternions as real-number vectors, here we view them as an extension of the complex numbers called hypercomplex numbers.

Owing to their noncommutative multiplication, quaternionic SISO systems differ from classical real or complex SISO systems. In some ways, they resemble MIMO systems, but they also exhibit unique behaviors that complicate design, rendering many classical tools inapplicable. This underscores the need for alternative methods of control analysis and design. Our approach offers the first method that utilizes a genuine quaternionic polynomial equation.

\section{BACKGROUND MATHEMATICS}

A quaternion has the form 
$a + b\i + c\j + d\k,$ where $ a, b, c, d $ are real numbers, and $ \i, \j, \k $ are imaginary units. 
Quaternions generalize the complex numbers but feature \emph{noncommutative multiplication}  (e.g., $\i\j = \k $ but $\j\i = -\k $). $\HH$ denotes their skew field. Quaternions $p,q$ are \emph{similar} if there exists a quaternion $u$ such that $ p = uqu^{-1},$ forming a \emph{similarity class} $[q].$  

Quaternionic matrices also differ from their complex counterparts. For instance, they lack a determinant, and \emph{left eigenvalues} $Av = \lambda v$ differ from \emph{right eigenvalues} $Av = v \lambda .$
For computational algorithms, see \cite{Wei18} and references therein.

A \emph{discrete-time quaternionic signal} is an infinite sequence 
$ x= \left\{x_n,x_{n+1}, x_{n+2}, \ldots  \right\},x_{k}\in\HH, n\in\ZZ .$ 
Using the backward-shift operator $d,$ defined as a sequence of zeros except for one at $n=1$, any signal can be expressed as a formal series 
$ x = x_n d^n + x_{n+1}d^{n+1}  + \ldots $
(see \cite{Kucera79}).
A sequence is \emph{causal} if it starts at $n=0$ and 
\emph{strictly causal,} if it starts at $n>0.$ A causal sequence is \emph{stable} if it converges to zero.

A \emph{quaternionic polynomial} is a causal sequence with finitely many nonzero terms.  
\begin{equation}
a_m d^m + a_{m+1} d^{m+1}+\ldots + a_nd^n, \;a_i\in\HH. 
\label{Eq:qpol}
\end{equation}
Here, the \emph{indeterminate} $d$ commutes with the coefficients $a_i$, and basic properties such as degree and order adhere to standard polynomial definitions.

However, due to noncommutativity, division, divisors, and greatest common divisors depend on the side of multiplication. Similarly, evaluation is noncommutative.
A quaternion $q$ is a \emph{right zero} of the polynomial $a(d)\in\HHd$ if 
$a_0+a_1q+\ldots+a_nq^n = 0$
and a \emph{left zero} if $a_0+qa_1+\ldots+q^n a_n= 0.$
A quaternionic polynomial of degree $n$ has either $n$ or infinitely many right zeros \cite{GordonMotzkin65,Lam91}.
Two polynomials  $a, b\in\HHd$  are right-zero coprime if they share no common right zero.

A polynomial $a\in\HHd$ is \emph{stable} if its inverse $a^{-1}$ is a stable quaternionic sequence. 
This requires that all right zeros of $a$ have norms greater than 1, a direct consequence of using the backward-shift operator $d$ instead of the forward-shift 
$z.$

Fractions of real polynomials are a standard tool in control theory. However, quaternionic polynomial fractions behave more like fractions of real polynomial matrices due to the noncommutativity of multiplication. This requires distinguishing between left and right fractions: A \emph{left quaternionic polynomial fraction} is
\begin{equation}
    a_{\ell}^{-1} b_{\ell} , \quad  a_{\ell},b_{\ell} \in\HHd,  \label{Eq:lqpf}
\end{equation}
where  $a_{\ell}$ is the \emph{left denominator} and $b_{\ell} $ is the \emph{left numerator.} A \emph{right quaternionic polynomial fraction} is
\begin{equation}
    a_{\err} b_{\err}^{-1}  , \quad  a_{\err},b_{\err} \in\HHd , \label{Eq:rqpf}
\end{equation}
where $a_{\err}$ is the \emph{right numerator} and $b_{\err} $
 is the \emph{right denominator.}
Typically, the polynomials in \eqref{Eq:rqpf} must be right coprime, and those in \eqref{Eq:lqpf} left coprime.

We use the backward-shift \emph{indeterminate} (operator) $d$ instead of the quaternionic \emph{variable} $\zi$ to avoid noncommutativity issues and as the quaternionic $z$-transform is not yet fully developed.
%
\section{QUATERNIONIC SYSTEMS}
State-space models for discrete-time single-input, single-output systems over quaternions are commonly expressed as
\begin{equation}
\begin{aligned} 
    x(k+1) &= F x(k) + G u(k) , \\
y(k) &= H x(k) +J u(k),  
\end{aligned}    
\label{Eq:ssmodel}
\end{equation}
where, all signals  ($u(k),y(k)\in\HH, x(k)\in\HHn,$) and matrices ($F\in\HHnn, G\in\HHnm, H\in\HH^{l\times n},J\in\HH^{l\times m}$) have quaternion entries.

The causal scalar quaternionic sequence 
\begin{equation}
 S = S_0 + S_1d + S_2 d + \cdots  ,     \label{Eq:tf1a}   
\end{equation}
with
\begin{equation}
\begin{aligned}
    S_0 &= J , \\
    S_k &= H F^{k-1}G, \;k = 1,2,\ldots  ,
\end{aligned}   \label{Eq:tf1b}   
\end{equation}
serves as the system's transfer function. It can also be expressed compactly as
\begin{equation}
S = J + d H (I_n-d F)^{-1}G.  \label{Eq:tf2}
\end{equation}
Any quaternionic transfer function can be represented as a quaternionic polynomial fraction
\begin{equation}
S = p_{\ell}^{-1} q_{\ell} = q_{\err} p_{\err}^{-1} , \label{Eq:tffrac}
\end{equation}
where $p_{\ell}, q_{\ell} $ are left coprime and $p_{\err}, q_{\err} $ are right coprime. These fractions are unique up to multiplication by a nonzero quaternion from within.

The inverse matrix $(I_n-d F)^{-1}$ in \eqref{Eq:tf2} 
is not a quaternionic polynomial, making its computation intricate. However, we can avoid this by expressing 
$S$ in \eqref{Eq:tf2} as a polynomial fraction from. Below, we outline the procedure for obtaining the left fraction.

\begin{Algo} \label{Alg:1}
\mbox{}\\
\textbf{Given:} $F,G,H,J.$\\
\textbf{1)} Find $p_\ell\in\HHd, Q_\ell \in\HHggd{1}{n}$ such that
  $$
  p_\ell^{-1}Q_\ell = d H \left( I_n - d F \right)^{-1}.
  $$
\textbf{2)} Compute the numerator
  $$
  q_\ell = p_\ell J + Q_\ell G \in \HHd .
  $$
\textbf{Result:} $p_\ell, q_\ell \in\HHd$ satisfying
  $
  S = p_{\ell}^{-1}(d) q_\ell(d).
  $
\end{Algo}
By ensuring coprimeness at each step, the resulting 
 $ p_{\ell}, q_\ell$ remains left coprime.
 Similarly, the right polynomial fraction representation of $S$ can be directly obtained. 
 
 Finally, the right fraction $ q_{\err} p_{\err}^{-1}$ follows from the left fraction $p_{\ell}^{-1} q_{\ell} $
by computing the right null space of the matrix
$$\begin{bmatrix}
    p_\ell & - q_\ell 
\end{bmatrix}.$$

A realization $(E, F,G,H)$ is minimal if it has the fewest states among all that realize 
$S.$ Only such a minimal realization uniquely corresponds to $S;$ nonminimal ones harbor hidden (unreachable or unobservable) modes. If the realization is minimal and fractions coprime, we have 
$$
n = \max(\deg p_\ell,\deg q_\ell) = \max(\deg p_\err, \deg q_\err) ,
$$
since we operate in the backward-shift  $d.$

In classical systems, if the realization is minimal and the fractions in \eqref{Eq:tffrac} are coprime, then all three polynomials $\det (I_n - d F),\;  p_\ell$ and $p_\err$
share the same zeros, which also coincide with the reciprocals of the eigenvalues of 
F. However, in quaternionic systems, the usual notions of determinant and characteristic polynomials do not directly apply. Nevertheless, as we will show, although the right zeros of 
$p_\ell$ and $p_\err$ may differ, they remain similar—and they are also similar to the reciprocals of the right eigenvalues of $F.$

\begin{The}\label{The:zeroslrf}
Let $p_\ell,q_\ell\in\HHd$ be left coprime and $p_\err,q_\err\in\HHd$ be right coprime polynomials such that 
$$p_{\ell}^{-1} q_{\ell} = q_{\err} p_{\err}^{-1}.$$ 
Then, for every right zero $r_{p_{\ell}}$
of $p_{\ell},$ there exists a right zero $r_{p_{\err}}$ of $p_{\err}$ such that
$$
r_{p_{\err}} \in [ r_{p_{\ell}} ].
$$
Conversely, for every right zero $r_{p_{\err}}$
of $p_{\err}$ there exists a right zero $r_{p_{\err}}$ of $p_{\err}$ such that
$$
r_{p_{\ell}} \in [ r_{p_{\err}} ].
$$
\end{The}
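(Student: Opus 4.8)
The plan is to clear denominators, pass to a well-chosen finite quotient ring, and transfer an annihilator statement across the resulting identity. First I would multiply $p_\ell^{-1}q_\ell = q_\err p_\err^{-1}$ out in the quotient skew field of $\HHd$ to obtain the polynomial identity
\[ q_\ell\,p_\err \;=\; p_\ell\,q_\err \qquad\text{in }\HHd . \]
For $p\in\HHd$ write $N(p)=p\conj{p}$; this lies in $\RR[d]$ (it is fixed by coefficientwise conjugation, which is an anti-automorphism with $\conj{d}=d$) and is multiplicative, $N(pq)=N(p)N(q)$, because $N(q)$ is real and hence central. The similarity class $\class{r}$ is captured by its real minimal polynomial $\Psi_r$: namely $d-r$ if $r\in\RR$, and $d^2-2(\Re r)d+\abs{r}^2=(d-r)(d-\conj{r})$ otherwise. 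The dictionary I will rely on is that $p$ has a right zero in $\class{r}$ iff $\Psi_r\mid N(p)$, equivalently iff the image of $p$ in $A:=\HHd/(\Psi_r)$ is a non-unit. One direction is immediate from the factor theorem ($p=\hat p\,(d-s)$ with $s\in\class{r}$ forces $\Psi_r\mid N(p)$); the converse is standard, see \cite{Lam91}.

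For the forward direction I would fix a right zero $r$ of $p_\ell$ and reduce everything modulo $\Psi_r$. Because $\Psi_r$ is real it is central, so $A$ is a genuine ring, isomorphic to $\HH$ when $r\in\RR$ and to $\mathbb{C}^{2\times2}$ (the complexified quaternions) when $r\notin\RR$; in either case $A$ is simple Artinian, so a non-unit has a nonzero left annihilator. Since $(d-r)$ right-divides $p_\ell$, the image $\mathbf p_\ell\in A$ is a non-unit, so I may pick $v\neq0$ with $v\,\mathbf p_\ell=0$. Reducing the displayed identity gives $\mathbf q_\ell\,\mathbf p_\err=\mathbf p_\ell\,\mathbf q_\err$, and left-multiplying by $v$ annihilates the right-hand side, leaving $(v\,\mathbf q_\ell)\,\mathbf p_\err=0$. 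If $v\,\mathbf q_\ell\neq0$, then $\mathbf p_\err$ has a nonzero left annihilator, hence is a non-unit, so $\Psi_r\mid N(p_\err)$ and $p_\err$ has a right zero in $\class{r}$, as required.

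The step that needs care—and the place where the hypothesis is genuinely used—is excluding the degenerate case $v\,\mathbf q_\ell=0$. It is tempting to argue through norms, but this fails: left coprimeness of $p_\ell,q_\ell$ does \emph{not} force $N(p_\ell),N(q_\ell)$ to be coprime, as $d-\i$ and $d-\j$ already show (here $N=d^2+1$ for both, yet they are left coprime). The correct use of the hypothesis is the B\'ezout form of left coprimeness, $p_\ell\alpha+q_\ell\beta=1$ for some $\alpha,\beta\in\HHd$ (available since $\HHd$ is a principal ideal domain); reducing modulo $\Psi_r$ gives $\mathbf p_\ell\boldsymbol\alpha+\mathbf q_\ell\boldsymbol\beta=1$ in $A$, so $\mathbf p_\ell$ and $\mathbf q_\ell$ can share no common left annihilator. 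Hence $v\,\mathbf q_\ell=0$ together with $v\,\mathbf p_\ell=0$ would force $v=0$, a contradiction, and the degenerate case is ruled out.

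The converse follows by the mirror-image argument: given a right zero $r$ of $p_\err$, the image $\mathbf p_\err$ is a non-unit carrying a nonzero \emph{right} annihilator $u$; pushing $u$ through $\mathbf q_\ell\,\mathbf p_\err=\mathbf p_\ell\,\mathbf q_\err$ from the right yields $\mathbf p_\ell(\mathbf q_\err u)=0$, and the B\'ezout identity $\gamma p_\err+\delta q_\err=1$ coming from \emph{right} coprimeness rules out $\mathbf q_\err u=0$, forcing $\mathbf p_\ell$ to be a non-unit and hence $p_\ell$ to have a right zero in $\class{r}$. I expect the main obstacle to be exactly this interplay of sides: identifying the quotient $A$ and the equivalence ``right zero in a class $\Leftrightarrow$ non-unit in $A$'', and then matching the correct (left versus right) B\'ezout identity to the correct (left versus right) annihilator so that the degenerate cases are excluded. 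The real and non-real cases can be handled uniformly once $A$ is recognised as $\HH$ or $\mathbb{C}^{2\times2}$, the only difference being that in the division-ring case a non-unit is simply $0$.
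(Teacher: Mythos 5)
Your proof is correct, and it takes a genuinely different (and more careful) technical route than the paper's. Both arguments begin from the same cross-multiplied identity $p_\ell q_\err = q_\ell p_\err$ and both invoke coprimeness to prevent the zeros of $p_\ell$ from being ``absorbed'' by $q_\ell$; but where the paper's two-sentence proof simply asserts that the zeros of $p_\ell$ are similar to zeros of the product and that left coprimeness then forces them to be similar to zeros of $p_\err$, you supply the actual mechanism: reduce modulo the central real polynomial $\Psi_r$ attached to the similarity class, identify the quotient as the simple Artinian ring $\HH$ or $\CC^{2\times 2}$, and transport a one-sided annihilator across the reduced identity, with the correctly sided B\'ezout identity (cofactors on the right for $\gcld$, on the left for $\gcrd$, exactly as the paper itself uses in its Solvability theorem) ruling out the degenerate case. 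Your remark that left coprimeness of $p_\ell,q_\ell$ does \emph{not} make $N(p_\ell)$ and $N(q_\ell)$ coprime (the pair $d-\i$, $d-\j$) pinpoints precisely where the paper's one-line coprimeness step is under-justified, so your annihilator argument is not merely an alternative but a repair of that gap. What the paper's sketch buys is brevity and a statement phrased purely in terms of zeros and similarity classes; what your argument buys is a self-contained proof of the dictionary ``$p$ has a right zero in $\class{r}$ iff its image mod $\Psi_r$ is a non-unit'' together with a verifiable treatment of the only delicate step, with left/right sides matched consistently throughout.
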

\begin{proof}
Since the polynomial fractions are equal, we have
\begin{equation}
    p_\ell q_\err = q_\ell p_\err . \label{Eq:l2r3}
\end{equation}
The zeros of $p_\ell$ are similar (though not necessarily equal) to the zeros of the left-hand side of this equation and, thus, similar to those of the right-hand side. Given that $p_\ell,q_\ell\in\HHd$ are left coprime, the zeros of $p_\ell$  must be similar to the zeros of $p_{\err}.$ The reverse direction follows by an analogous argument.
\end{proof}
The reverse process—finding a state realization for a given transfer function—is more complex for quaternionic systems than for classical ones. Despite their structural similarity, quaternionic realizations require additional steps.

\begin{The}[Quaternionic state realization]
\label{The:ssreal}
Let $S$ be a quaternionic transfer function expressed as a left quaternionic polynomial fraction 
 \begin{equation*}
  S = a_\ell^{-1}b_\ell =
   \left( 1 + a_1 d + \cdots + a_n d^n \right)^{-1}
    \left( b_0 + \cdots b_n d^n \right) ,
       \label{Eq:lf}
 \end{equation*}  
where $a_0 = a_\ell (0) = 1.$ 
Then, a corresponding state-space realization 
$(E,F,G,J)$ satisfying \eqref{Eq:tf2} is given by
\begin{equation*}
\begin{aligned}
F &=
\begin{bmatrix}
    0 & 1& 0 & \ldots & 0\\
    0 & 0& 1 & \ldots & 0\\
    \vdots & \vdots & \ddots & \ddots & \vdots\\
    0 & 0& 0 & \ldots & 1\\
    -\hat{a}_n & -\hat{a}_{n-1}& \hat{a}_{n-2}&\ldots &-\hat{a}_1
\end{bmatrix} ,
&G =
\begin{bmatrix}
    0\\ 0\\ \vdots \\ 0\\ 1
\end{bmatrix} ,
\\[2mm]
H &=
\begin{bmatrix}
\hat{b}_n,\hat{b}_{n-1} , \cdots ,\hat{b}_1 
\end{bmatrix} ,
&J =
\begin{bmatrix}
    b_0
\end{bmatrix}.
\end{aligned}
\label{Eq:realization}
\end{equation*}
Here, $\hat{a}_i$ are the coefficients of the polynomial 
$\hat{a}\in\HHd$, obtained from the left-to-right fraction conversion
\begin{equation}
   \hat{b}(d)\hat{a}(d)^{-1}= a(d)^{-1}\tilde{b}(d) ,
    \label{Eq:rf}
\end{equation}
where $\hat{a}_0 = 1$ and $\tilde{b}(d)\in\HHd$ is given by
\begin{equation}
     \tilde{b}(d) = \tilde{b}_1d + \cdots + \tilde{b}_nd^{-n} 
     = b_\ell(d)-b_0 a\ell(d).
\end{equation}
\end{The}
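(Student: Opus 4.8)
The plan is to verify directly that the proposed $(F,G,H,J)$ satisfies the realization formula \eqref{Eq:tf2}, i.e. that $J + dH(I_n - dF)^{-1}G$ reduces to $S = a_\ell^{-1}b_\ell$. The entire computation lives in the ring of causal power series in the central indeterminate $d$, so $(I_n - dF)^{-1} = \sum_{k\ge 0} d^k F^k$ is well defined and $d$ may be moved freely past quaternion coefficients; the only care needed is that quaternion coefficients are never commuted past one another.

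First I would solve the linear system $(I_n - dF)\xi = G$ for $\xi = (I_n-dF)^{-1}G$, exploiting the companion structure of $F$. The superdiagonal of $1$'s forces, row by row, $\xi_i = d\,\xi_{i+1}$ for $i=1,\dots,n-1$, hence $\xi_i = d^{n-i}\xi_n$. Substituting these into the bottom-row equation and using centrality of $d$ to collect the powers, the last row collapses to $\hat a(d)\,\xi_n = 1$ with $\hat a(d) = 1 + \hat a_1 d + \cdots + \hat a_n d^n$. Because $\hat a_0 = 1$ is a unit, $\hat a$ is invertible as a power series, so $\xi_n = \hat a(d)^{-1}$ and $\xi = [\,d^{n-1},\dots,d,1\,]^{\mathsf T}\hat a(d)^{-1}$.

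Next I would form $dH\xi$. With $H = [\hat b_n,\dots,\hat b_1]$ and the explicit $\xi$, the entries reassemble (again using that $d$ is central) into $dH\xi = \hat b(d)\,\hat a(d)^{-1}$, where $\hat b(d) = \hat b_1 d + \cdots + \hat b_n d^n$. Thus the realization produces the right fraction
\[
J + dH(I_n - dF)^{-1}G = b_0 + \hat b(d)\,\hat a(d)^{-1}.
\]
Finally I would invoke the left-to-right conversion \eqref{Eq:rf}, $\hat b(d)\hat a(d)^{-1} = a_\ell(d)^{-1}\tilde b(d)$, together with the fact that $\tilde b$ is the strictly causal remainder $\tilde b = b_\ell - a_\ell b_0$, to rewrite the right-hand side as $b_0 + a_\ell^{-1}(b_\ell - a_\ell b_0) = a_\ell^{-1}b_\ell = S$, completing the verification.

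The main obstacle is the noncommutative bookkeeping between left and right fractions. The companion realization inevitably places the coefficients $\hat a_i,\hat b_i$ as left multipliers, so it natively realizes the right fraction $\hat b\,\hat a^{-1}$, whereas the data are given as the left fraction $a_\ell^{-1}b_\ell$; reconciling the two is exactly the role of the conversion \eqref{Eq:rf} and is what distinguishes this construction from the classical companion form. In particular one must be careful that the remainder $\tilde b$ is formed with $b_0$ multiplied on the side that makes $a_\ell^{-1}\tilde b = a_\ell^{-1}b_\ell - b_0$ hold, namely as $b_\ell - a_\ell b_0$, since over $\HH$ the factor $b_0$ does not commute with $a_\ell$ and the opposite placement would leave a spurious term $b_0 - a_\ell^{-1}b_0\,a_\ell$.
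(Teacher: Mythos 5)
Your verification is correct and is, in substance, the argument the paper only gestures at (its own proof is omitted ``due to space constraints,'' described as the classical approach plus additional fraction conversions): the companion structure gives $\xi_i=d^{\,n-i}\xi_n$, the bottom row collapses to $\hat a(d)\xi_n=1$ because $d$ is central, $dH\xi=\hat b(d)\hat a(d)^{-1}$, and the conversion \eqref{Eq:rf} pulls this back to the left fraction. One point deserves emphasis: your closing observation is not just bookkeeping but an actual correction to the statement as printed. The theorem defines $\tilde b = b_\ell - b_0\, a_\ell$ (with $b_0$ on the \emph{left}), whereas your computation requires $\tilde b = b_\ell - a_\ell\, b_0$ so that $b_0 + a_\ell^{-1}\tilde b = a_\ell^{-1}b_\ell$; with the printed placement one is left with the residual term $b_0 - a_\ell^{-1} b_0\, a_\ell$, which vanishes only when $b_0$ is real or otherwise commutes with the coefficients of $a_\ell$ (as happens in the paper's example, where $b_0=J=0$). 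Both placements yield a strictly causal $\tilde b$, so the error is invisible at the level of degrees and only shows up in the noncommutative cancellation you checked. Your proof is therefore complete and slightly stronger than what the paper records.
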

\begin{proof}
The proof follows the classical system approach but requires additional polynomial fraction conversions and technical details. Due to space constraints, we omit these here.
\end{proof}
This realization corresponds to the quaternionic counterpart of the classical controllable canonical form. While the equations may resemble those in classical systems, they integrate coefficients from multiple polynomial fractions.
\begin{The} \label{The:zeroseigen}
If $r_i$ is a right zero of the denominator polynomial $a_\ell,$ then
$$
r_i^{-1}\in [\lambda_{\err,i}] , 
$$ 
where $\lambda_{\err,i}$ is some right eigenvalue of $F,$ for $i=1,\ldots, \deg a_\ell .$ Moreover, if
 $\deg b_\ell - \deg a_\ell = k > 0$, then $F$ has additional $k$ additional zero left (and hence also right) eigenvalues.
\end{The}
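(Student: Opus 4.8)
The plan is to route the claim through the right-denominator polynomial $\hat a$ that defines the companion matrix $F$, rather than through $a_\ell$ directly, since $F$ is built from the $\hat a_i$. First I would record the two facts I need. By Theorem~\ref{The:zeroslrf} applied to the left-to-right conversion \eqref{Eq:rf}, every right zero $r$ of $a_\ell$ is similar to some right zero $s$ of $\hat a$, and conversely; moreover this forces $\deg\hat a=\deg a_\ell$, since similar quaternions occur in matching similarity classes and coprimeness rules out a degree drop. Because $r$ similar to $s$ gives $r^{-1}$ similar to $s^{-1}$, we get $r^{-1}\in[s^{-1}]$, so it suffices to prove that the reciprocal of each right zero of $\hat a$ is an \emph{actual} right eigenvalue of $F$.

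The heart of the argument is a direct computation with the companion matrix. Writing the right-eigenvalue equation $Fv=v\lambda$ for $v=(v_1,\dots,v_n)^{\top}$, the first $n-1$ rows (the shift block) give $v_k=v_1\lambda^{k-1}$, so any right eigenvector is a Vandermonde-type column determined by its top entry. Substituting into the last row yields the scalar condition $\sum_{k=0}^{n}\hat a_{n-k}\,v_1\lambda^{k}=0$ with $\hat a_0=1$. I would then take $v_1=1$ and set $\lambda=s^{-1}$ for a right zero $s$ of $\hat a$, noting $s\neq 0$ since $\hat a_0=1$: because powers of the single quaternion $s$ commute, the sum factors on the right as $\bigl(\sum_{j=0}^{n}\hat a_j s^{j}\bigr)s^{-n}=\hat a(s)s^{-n}=0$. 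Hence $\lambda=s^{-1}$, with the explicit nonzero eigenvector $(1,\lambda,\dots,\lambda^{n-1})^{\top}$, is a genuine right eigenvalue of $F$. Combined with the first paragraph this gives $r_i^{-1}\in[s_i^{-1}]=[\lambda_{\err,i}]$ for the $\deg a_\ell$ right zeros of $a_\ell$.

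For the ``moreover'' part, observe that when $\deg b_\ell-\deg a_\ell=k>0$ the realization dimension is $n=\deg b_\ell=\deg a_\ell+k$, while $\deg\hat a=\deg a_\ell=n-k$; thus the top $k$ coefficients $\hat a_n=\hat a_{n-1}=\cdots=\hat a_{n-k+1}$ vanish. I would then exhibit the zero-eigenvalue structure explicitly. Since $\lambda=0$ makes the left and right eigenvalue equations coincide (both read $Fv=0$), it suffices to analyze the action of $F$ on the standard basis. From $Fe_j=e_{j-1}-\hat a_{n-j+1}e_n$, the vanishing of the top $k$ coefficients produces the nilpotent chain $Fe_1=0$ and $Fe_j=e_{j-1}$ for $2\le j\le k$, i.e.\ a single nilpotent Jordan block of size $k$ at the eigenvalue $0$ (the chain stops at $j=k+1$ because $\hat a_{n-k}$, the leading coefficient of $\hat a$, is nonzero). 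This accounts for exactly $k$ zero eigenvalues with multiplicity, the remaining $n-k$ eigenvalues being the reciprocals from the previous step.

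The main obstacle I anticipate is not the companion-matrix computation, which pleasingly goes through \emph{exactly} because powers of a single quaternion commute, but the two degree/counting bookkeeping issues in the noncommutative setting: justifying $\deg\hat a=\deg a_\ell$ from coprimeness together with Theorem~\ref{The:zeroslrf}, and arguing that the size-$k$ nilpotent chain genuinely contributes $k$ eigenvalues ``with multiplicity,'' given that quaternionic right-eigenvalue multiplicities are delicate. The fact that the extra eigenvalue is $0$---where the left and right eigenvalue problems coincide and the ordinary module-theoretic Jordan structure of $F$ is available---is precisely what makes this final count tractable.
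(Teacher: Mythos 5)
Your proof is correct, and it takes a genuinely different route from the paper's. The paper, like you, first passes from the right zeros of $a_\ell$ to right zeros of the right denominator via Theorem~\ref{The:zeroslrf}, but it then finishes both halves by citation: it invokes \cite{SerodioPereiraVitoria01} for the fact that the right zeros of a polynomial are left eigenvalues of its companion matrix, inserts the reciprocal to account for the backward shift $d=z^{-1}$, and again cites that reference for the fact that each left eigenvalue of a companion matrix generates an entire similarity class of right eigenvalues; the second claim is likewise asserted directly from the vanishing of the top coefficients. You instead make the companion-matrix step self-contained and bypass left eigenvalues altogether: solving $Fv=v\lambda$ row by row forces $v_k=v_1\lambda^{k-1}$, and the last row reduces (precisely because powers of a single quaternion commute) to $\hat a(s)\,s^{-n}=0$ for $\lambda=s^{-1}$, yielding an explicit right eigenvector; similarly your chain $Fe_1=0$, $Fe_j=e_{j-1}$ for $2\le j\le k$ exhibits the $k$ extra zero eigenvalues concretely. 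This buys verifiability and avoids the delicate left-to-right eigenvalue transfer, at the cost of length. Two further points in your favor: your coefficient count $\hat a_n=\cdots=\hat a_{n-k+1}=0$ (exactly $k$ coefficients) is the one consistent with $\deg\hat a=n-k$, whereas the paper's proof lists $\hat a_n=\cdots=\hat a_{n-k}=0$, which is $k+1$ coefficients; and your caveat about right-eigenvalue multiplicities is warranted, though the paper is no more careful there. The one shared step worth not glossing over is the coprimeness hypothesis needed to apply Theorem~\ref{The:zeroslrf} to the conversion \eqref{Eq:rf}; the paper assumes it implicitly as well.
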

\begin{proof}
First, note that all $r_i\neq 0$ as $a_0 = a_\ell (0) = 1.$
 By Theorem \ref{The:zeroslrf}, each $r_i$ is similar to a right zero $\hat{r}_i$ of $a_\err.$

Since $F$ is in a bottom companion form, if  $a_\ell$ were a polynomial in the forward shift 
$z,$ then its right zeros would correspond to left eigenvalues of $F$ (see \cite{SerodioPereiraVitoria01}). However, since $a_\ell$ is expressed in the backward shift  $d=\zi,$ each of its right zeros is the reciprocal of a left eigenvalue of $F.$
By \cite{SerodioPereiraVitoria01}, each left eigenvalue of the companion matrix represents an entire class of right eigenvalues, confirming the first claim.

For the second claim, note that since the polynomial fractions are coprime, their denominators share the same degree. If $\hat{a}_n=\hat{a}_{n-1}=\cdots=\hat{a}_{n-k}=0,$ the $F$ has $k$ left eigenvalues equal to zero, which implies it also has $k$ right eigenvalues equal to zero.
\end{proof}

\section{STABILITY}
Stability is a fundamental property of linear systems. A system $(F,G,H,J)$ defined by \eqref{Eq:ssmodel} is (internally, asymptotically) stable if, for any initial state $x_0\in\HHn,$ the initial state response
$$S_{x_0} = \{ x_0, F x_0, F^2 x_0,\ldots \}$$ 
converges to zero. 
Similar to \eqref{Eq:tf2}, this response can be expressed as a left polynomial matrix fraction
\begin{equation}
 S_{x_0} = (I_n-d F)^{-1}x_0 .  
        \label{Eg:isr}
\end{equation}
Clearly, $S_{x_0}$ converges to zero for any $x_0$ if and only if the sequence
 $(I_n-d F)^{-1} =  \{ I_n, F, F^2,\ldots \}$
converges to zero. It is well known that this occurs if and only if all right eigenvalues of the quaternionic matrix $F$ satisfy
$$
| \lambda_{\err, i}| > 0.
$$ 

If $(I_n-d F)^{-1}$ converges to zero, then the transfer function $S=H(I_n-d F)^{-1}G$ converges to zero. 
The converse holds if and only if the system has no unstable hidden modes.

In the polynomial fraction representation
\begin{equation}
S =  p_{\ell}^{-1} q_{\ell} = q_{\err} p_{\err}^{-1} = J + d H (I_n-d F)^{-1}G,
\label{Eq:alltf}
\end{equation}
 the right zeros of the denominator polynomials $p_{\ell}$ and $ p_{\err} $ 
are similar but not necessarily equal to the right eigenvalues of $F.$ Fortunately, this difference does not affect stability analysis.
\begin{The}
Let the realization $(F,G,H,D)$ be minimal, and assume the polynomial fractions in \eqref{Eq:alltf} are coprime. Then, all right eigenvalues of 
$F$ are stable if and only if the denominators 
$ p_{\ell}$ and $ p_{\err}$ are stable polynomials.
\end{The}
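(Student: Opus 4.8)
The plan is to reduce the statement to the correspondence between the right zeros of the denominators and the right eigenvalues of $F$ that is already in hand, and then to exploit the fact that similar quaternions share a common norm. First I would assemble the two ingredients I need. By Theorem~\ref{The:zeroseigen}, every right zero $r_i$ of $p_\ell$ satisfies $r_i^{-1}\in[\lambda_{\err,i}]$ for some right eigenvalue $\lambda_{\err,i}$ of $F$, and the only eigenvalues of $F$ not accounted for this way are equal to zero (those arising from a degree excess of the numerator). By Theorem~\ref{The:zeroslrf}, the right zeros of $p_\ell$ and of $p_\err$ lie in the same similarity classes. Because the realization is minimal and the fractions are coprime, I would argue that this accounting is exhaustive: apart from possibly some zero eigenvalues, the nonzero right eigenvalues of $F$ are exactly the reciprocals of the similarity classes of right zeros of $p_\ell$.

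Next I would translate both stability conditions into statements about norms. For a nonzero quaternion $q$ one has $|q^{-1}| = 1/|q|$, and similar quaternions have equal norm, since the norm is multiplicative and $|u|\,|u^{-1}| = 1$. Hence each pairing above gives $|\lambda_{\err,i}| = 1/|r_i|$. Recalling from the stability discussion that the sequence $(I_n - dF)^{-1} = \{ I_n, F, F^2, \ldots \}$ converges to zero precisely when every right eigenvalue of $F$ has norm strictly less than one, and that by definition a polynomial is stable exactly when all of its right zeros have norm strictly greater than one, the equivalence $|\lambda_{\err,i}| < 1 \Leftrightarrow |r_i| > 1$ is immediate. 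The zero eigenvalues, if present, satisfy the convergence condition trivially and contribute no right zeros, so they can be set aside.

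Combining these observations, all right eigenvalues of $F$ are stable if and only if every right zero of $p_\ell$ has norm greater than one, that is, if and only if $p_\ell$ is a stable polynomial. Finally, since the right zeros of $p_\ell$ and $p_\err$ populate the same similarity classes and therefore have identical norms, $p_\ell$ is stable if and only if $p_\err$ is, which closes the chain of equivalences and yields the theorem.

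I expect the main obstacle to be the bookkeeping that upgrades the zero-to-eigenvalue correspondence from one-directional to exhaustive. Theorem~\ref{The:zeroseigen} guarantees that each right zero of $p_\ell$ produces a right eigenvalue of $F$, but to conclude stability I must also rule out an unaccounted \emph{nonzero} eigenvalue of $F$ that could spoil the convergence of $\{F^k\}$. This is exactly where minimality and coprimeness must be invoked—they fix $\deg p_\ell$ to match the number of nonzero eigenvalues and exclude hidden modes—so the delicate point is justifying that the reciprocal map between right zeros and right eigenvalue classes is a bijection on the nonzero part, after which the norm comparison is routine.
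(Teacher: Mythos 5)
Your argument is correct and follows essentially the same route as the paper: the paper's proof is a one-line appeal to Proposition~1 (norm invariance within a similarity class), implicitly combined with the zero--eigenvalue correspondence of Theorems~\ref{The:zeroslrf} and \ref{The:zeroseigen}, which is exactly the chain you spell out. Your version simply makes explicit the bookkeeping (reciprocal map, exhaustiveness via minimality and coprimeness, handling of zero eigenvalues) that the paper leaves unstated.
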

\begin{proof}
The proof follows from Proposition 1, noting that all quaternions in a similarity class share the same norm.
\end{proof}
\begin{Pro}
$$
|\lambda | > 1, q\in\HH \quad\Rightarrow\quad  |\mu | > 1,\; \forall \mu \in [\lambda],\mu\in\HH .
$$
\end{Pro}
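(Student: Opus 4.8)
The plan is to reduce the statement to the single fact that the quaternion norm is multiplicative, which immediately forces similarity to preserve norms. First I would recall the defining identity $\abs{q}^2 = q\conj{q}$ together with the anti-homomorphism property of quaternionic conjugation, $\overline{pq} = \conj{q}\,\conj{p}$. From these one obtains
\begin{equation*}
\abs{pq}^2 = pq\,\overline{pq} = pq\conj{q}\conj{p} = p\abs{q}^2\conj{p} = \abs{p}^2\abs{q}^2,
\end{equation*}
using that $\abs{q}^2$ is a real scalar and hence commutes with $p$. Taking square roots gives the multiplicativity $\abs{pq} = \abs{p}\abs{q}$ for all $p,q\in\HH$, the one structural property the whole argument rests on.

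Next I would invoke the definition of similarity class from the Background section: any $\mu \in [\lambda]$ can be written as $\mu = u\lambda u^{-1}$ for some nonzero $u\in\HH$. Applying multiplicativity twice yields $\abs{\mu} = \abs{u}\,\abs{\lambda}\,\abs{u^{-1}}$. Because $\abs{u}\,\abs{u^{-1}} = \abs{u u^{-1}} = \abs{1} = 1$, we have $\abs{u^{-1}} = 1/\abs{u}$, so the conjugating factors cancel and $\abs{\mu} = \abs{\lambda}$. In other words, every element of $[\lambda]$ shares the common norm $\abs{\lambda}$—precisely the fact already used implicitly in the preceding stability theorem.

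Finally, the conclusion is immediate: under the hypothesis $\abs{\lambda} > 1$, the identity $\abs{\mu} = \abs{\lambda}$ gives $\abs{\mu} > 1$ for every $\mu \in [\lambda]$. There is no genuine obstacle here; the only point requiring care is the multiplicativity of the norm, and even that is a standard consequence of $\HH$ being a normed division algebra. If one prefers to avoid recomputing it, the proof can simply cite this property and then perform the two-line cancellation above.
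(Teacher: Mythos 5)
Your proposal is correct and rests on exactly the same key fact as the paper's proof, namely that all quaternions in a similarity class share the same norm; the paper simply cites this property (with a reference to Pereira and Vettori), whereas you derive it from the multiplicativity of the quaternion norm applied to $\mu = u\lambda u^{-1}$. No substantive difference in approach—yours just supplies the short computation the paper leaves to the citation.
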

\begin{proof}
This is straightforward, as all quaternions within a similarity class share the same norm (see also \cite{PereiraVettori06}).
\end{proof}
%
\section{FEEDBACK SYSTEM}
Consider the output feedback system
\begin{equation}
     y = Su + w  \eqand   u = -Ry + v ,
\end{equation}
where $S$ is the plant's transfer function, $R$ is the controller's transfer function, $u, y$ are the plant input and output sequences. 
The external inputs  $v$ and $w$ represent responses to initial conditions, external disturbances, etc. As in classical systems (see \cite{Kucera79}), the closed-loop transfer functions from $v$ and $w$ to  $y$ is given by
\begin{align}         \label{Eq:cltfy}
        y &= (1+SR)^{-1}Sv + (1+SR)^{-1}w  .
\end{align}
Expressing the plant and controller transfer functions as polynomial fractions
\begin{equation}
  S=a_\ell^{-1}b_\ell, \quad R = q_\err p_\err^{-1} ,     
\end{equation}
where $a_\ell, b_\ell,q_\err p_\err \in\HHd,$ 
relation \eqref{Eq:cltfy} transforms into:
\begin{equation}
\begin{aligned} 
        y =& p_\err (a_\ell p_\err +b_\ell q_\err )^{-1}b_\ell v 
               \\
               &+ p_\err (a_\ell p_\err +b_\ell q_\err )^{-1} a_\ell  w .
\end{aligned}    
\label{Eq:cltfy2}
\end{equation}
Defining the feedback denominator polynomial
\begin{equation}
  c = a_\ell p_\err +b_\ell q_\err  \in \HHd ,\label{Eq:cdenpol}
\end{equation}
we rewrite \eqref{Eq:cltfy2} as
\begin{equation}   
    y = p_\err c^{-1}b_\ell  v + p_\err c^{-1}a_\ell  w .
\end{equation}
Converting to left fractions 
$$g^{-1}h = p_\err c^{-1} ,$$ 
and denoting $h_v=hv,h_w=hw,$
the closed-loop transfer functions take the form
\begin{equation}
            y =  g^{-1}h_v v + g^{-1}h_w w .
\end{equation}
Given $a_\ell, b_\ell$ (the plant) and fixing closed-loop denominator polynomial $c$ turns \eqref{Eq:cdenpol} into a linear quaternionic polynomial equation with unknowns  $p_\err$ and $q_\err$ (the controller).
\section{POLYNOMIAL EQUATIONS}
Scalar linear polynomial equations are fundamental in classical systems, yet quaternionic polynomials require distinguishing among right-, left-, and two-sided forms due to noncommutative multiplication. For simplicity, we focus on a single form. Specifically, we consider the \emph{quaternionic polynomial equation}
\begin{equation}
a(d)x(d)+b(d)y(d)=c(d),                          \label{Eq:spe}
\end{equation}
where $a(d),b(d)$ and $c(d)$ are given quaternionic polynomials and $x(d),y(d)$ are unknowns. 
A solution is any pair $(x(d),y(d))$ from $\HHd$ that satisfies the equation.

\begin{The}[Solvability]

The equation (\ref{Eq:spe}) has a solution if and
only if the greatest common left divisor of $a$ and $b$ is also a left divisor of $c.$ 
\end{The}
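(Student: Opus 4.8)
The plan is to prove the two directions separately, exploiting the left-divisor structure forced by the left-hand side of \eqref{Eq:spe}. Write $g = \gcld(a,b)$ for the greatest common left divisor, so that $a = g\,\tilde a$ and $b = g\,\tilde b$ for some $\tilde a,\tilde b\in\HHd$. The necessity direction is then essentially a factoring observation: if $(x,y)$ is any solution, then $c = ax + by = g(\tilde a x + \tilde b y)$, which exhibits $g$ as a left divisor of $c$ with explicit right cofactor $\tilde a x + \tilde b y\in\HHd$. I would write this out in one display and note that it uses only the associativity of the (noncommutative) multiplication and the commuting of the indeterminate $d$ with the coefficients, so the classical argument transfers verbatim.

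For sufficiency I would first reduce to the coprime case via a Bézout-type identity. The key input is that $g=\gcld(a,b)$ can be realized as a left-combination: there exist $x_0,y_0\in\HHd$ with $a x_0 + b y_0 = g$. This is the noncommutative analogue of the extended Euclidean identity, and it holds for $\HHd$ because the backward-shift polynomial ring over the quaternions is a (left/right) Euclidean domain under degree — the same Euclidean division that underlies the right-zero theory and the fraction conversions used in Algorithm~\ref{Alg:1} and Theorem~\ref{The:zeroslrf}. Granting this, suppose $g$ left-divides $c$, say $c = g\,\hat c$ with $\hat c\in\HHd$. Then multiplying the Bézout identity on the right by $\hat c$ gives $a(x_0\hat c) + b(y_0\hat c) = g\hat c = c$, so $(x,y)=(x_0\hat c,\,y_0\hat c)$ is a solution. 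I would present this as a single short display chain rather than spelling out intermediate steps.

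The main obstacle, and the point requiring the most care, is the existence of the Bézout identity $a x_0 + b y_0 = g$ on the correct side. In the commutative setting this is immediate, but over $\HHd$ one must verify that the greatest common \emph{left} divisor is generated by \emph{right} multipliers acting on $a,b$ from the right — i.e. that the Euclidean algorithm run with right division produces cofactors on the matching side. Concretely, repeated right division $a = b\,u_1 + r_1$, $b = r_1 u_2 + r_2,\dots$ terminates (degrees strictly decrease) and back-substitution expresses the final nonzero remainder, which equals $g$ up to a unit, as $a x_0 + b y_0$. I would flag that the only subtlety is bookkeeping the side of each multiplication during back-substitution so that the cofactors land on the right of $a$ and $b$, matching the form of \eqref{Eq:spe}; once that is fixed the unit ambiguity in $g$ is harmless because a unit is an invertible quaternion that can be absorbed into $\hat c$. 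No appeal to determinants or eigenvalues is needed, so none of the quaternion-specific pathologies (noncommutative evaluation, left-versus-right zeros) enter the argument.
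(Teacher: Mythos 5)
Your proof is correct and follows essentially the same route as the paper: factor out $g=\gcld(a,b)$ for necessity, and for sufficiency right-multiply a Bézout identity $ax_0+by_0=g$ by the cofactor of $c$. The only difference is that you sketch the Euclidean-algorithm justification of that Bézout identity (with the correct side bookkeeping), whereas the paper simply cites \cite{DamianoGentiliStruppa} for it.
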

\begin{proof}
Let $g=\gcld (a,b)$ so that $a=g\tilde{a}$ and $b=g\tilde{b}$ for some left coprime polynomials $\tilde{a},\tilde{b}\in\HHd.$
Suppose $x,y\in\HHd$ is a solution of (\ref{Eq:spe}). 
Substituting $a=g\tilde{a}$ and $b=g\tilde{b}$ into \eqref{Eq:spe} gives
$$
g(\tilde{a}x+\tilde{b}y)=c .
$$ 
This implies that $g$ is a left divisor of $c.$ 

Conversely, if $g$ is a left divisor of $c$ then $c=g\tilde{c}$ for some $\tilde{c}\in\HHd.$
By a quaternionic version of Bézout's identity \cite{DamianoGentiliStruppa}, there exist polynomials $p,q\in\HHd$ such that
$$
ap+bq=g.
$$ 
Multiplying it on the right by $\tilde{c},$ yields
$$
a(p\tilde{c})+b(q\tilde{c})=g\tilde{c}=c.
$$ 
Thus, $x=p\tilde{c}$ and $y=q\tilde{c}$  form a solution to \eqref{Eq:spe}.
\end{proof}
\begin{The}[General solution]
Let $x',y'\in\HHd$ be a particular solution of equation \eqref{Eq:spe}. Then, the general solution is
\begin{equation}
\begin{array}{l}
x = x' - b_\err t ,\\
y  = y' + a_\err t   ,               
\end{array}                                              \label{Eq:gsspe}
\end{equation}
where $t\in\HHd$ is an arbitrary quaternionic polynomial parameter, and
$a_\err, b_\err \in\HHd$ are right-coprime quaternionic polynomials satisfying
\begin{equation}
 a^{-1}b = b_\err a_{\err}^{-1}.  
 \label{Eq:lrconv}
\end{equation}
\end{The}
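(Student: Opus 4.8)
The plan is to prove the statement in two directions: first that every pair of the form \eqref{Eq:gsspe} solves \eqref{Eq:spe}, and then that every solution arises in this way. The underlying structural fact I will rely on throughout is that $\HHd$ (polynomials over the skew field $\HH$ with the central indeterminate $d$) is a noncommutative principal ideal domain, so it has no zero divisors and admits left/right cancellation as well as a Euclidean-type division underpinning $\gcld$ and least common multiples.

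For the easy direction I would first extract the homogeneous identity hidden in the conversion \eqref{Eq:lrconv}: left-multiplying $a^{-1}b = b_\err a_\err^{-1}$ by $a$ and then right-multiplying by $a_\err$ gives $a b_\err = b a_\err$. Substituting \eqref{Eq:gsspe} into the left-hand side of \eqref{Eq:spe} then yields $a(x' - b_\err t) + b(y' + a_\err t) = (ax' + by') + (b a_\err - a b_\err)t = c + 0 = c$, using that $(x',y')$ is a particular solution and the identity just derived. Hence every such pair is a solution for arbitrary $t\in\HHd$.

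For the converse I would take an arbitrary solution $(x,y)$ and pass to the difference $\xi := x - x'$, $\eta := y - y'$, which satisfies the homogeneous equation $a\xi + b\eta = 0$; the whole problem thus reduces to describing the solution set of this homogeneous equation. Reusing the solvability setup, write $g = \gcld(a,b)$, $a = g\tilde a$, $b = g\tilde b$ with $\tilde a,\tilde b$ left coprime. Left-cancelling $g$ (legitimate since $\HHd$ is a domain and $g\neq 0$) reduces the equation to $\tilde a\xi = -\tilde b\eta$. Moreover $\tilde a^{-1}\tilde b = a^{-1}b = b_\err a_\err^{-1}$, so the very same $a_\err,b_\err$ of \eqref{Eq:lrconv} govern the reduced equation.

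The heart of the argument — and the step I expect to be the main obstacle — is the least-common-right-multiple characterization. I would argue that $m := \tilde a b_\err = \tilde b a_\err$ is a $\lcrm$ of $\tilde a$ and $\tilde b$, so that every common right multiple of $\tilde a$ and $\tilde b$ is itself a right multiple of $m$. Applying this to $\tilde a\xi = \tilde b(-\eta)$, which exhibits $\tilde a\xi$ as such a common right multiple, gives $\tilde a\xi = m t = \tilde a b_\err t$ for some $t\in\HHd$; left-cancelling $\tilde a$ yields $\xi = b_\err t$, and then left-cancelling $\tilde b$ in $\tilde b(-\eta) = \tilde b a_\err t$ yields $\eta = -a_\err t$. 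Replacing the free parameter $t$ by $-t$ recovers exactly the form \eqref{Eq:gsspe}. The delicate points are verifying that the right-coprime conversion \eqref{Eq:lrconv} genuinely furnishes the $\lcrm$ generator — a degree count using coprimeness on \emph{both} sides shows $\deg m = \deg\tilde a + \deg\tilde b$, which is precisely the $\lcrm$ degree when $\gcld(\tilde a,\tilde b)$ is a unit — and that the division/$\lcrm$ machinery survives noncommutativity. For these I would lean on the quaternionic division theory of \cite{Lam91, DamianoGentiliStruppa} and the classical polynomial-fraction template of \cite{Kucera79}, whose matrix-fraction solution formulas this result mirrors.
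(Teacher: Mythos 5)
Your proof is correct and reaches the paper's conclusion by a genuinely different, and considerably more detailed, route. Both arguments begin identically: subtract the particular solution to reduce to the homogeneous equation $a(x-x')+b(y-y')=0$. From there the paper switches to polynomial-matrix language, viewing $\begin{bmatrix} a & b\end{bmatrix}$ as a rank-one $1\times 2$ quaternionic polynomial matrix and asserting that its right null space is generated by a single full-rank column $Z$ with entries $-b_\err$ and $a_\err$ taken from the conversion \eqref{Eq:lrconv} — a fact imported from the matrix-fraction template of \cite{Kucera79} without further justification. You instead argue scalar-by-scalar: factor out $g=\gcld(a,b)$, cancel it (legitimate since $\HHd$ is a domain), and invoke the least-common-right-multiple property that $m=\tilde a b_\err=\tilde b a_\err$ generates the right-ideal intersection $\tilde a\HHd\cap\tilde b\HHd$, so every homogeneous solution is a right multiple of $m$; your sign bookkeeping ($\xi=b_\err s$, $\eta=-a_\err s$, then $t=-s$) correctly recovers \eqref{Eq:gsspe}. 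This $\lcrm$ step is precisely the scalar content of the paper's unexplained null-space claim, and your degree count (two-sided coprimeness forcing $\deg(\tilde a b_\err)=\deg\tilde a+\deg\tilde b=\deg\lcrm(\tilde a,\tilde b)$) supplies the justification the paper omits, though it still leans on standard Ore-domain facts — degree preservation under coprime fraction conversion and the $\lcrm$/$\gcld$ degree identity — for which your citations to \cite{Lam91,DamianoGentiliStruppa} are appropriate. You also verify the sufficiency direction explicitly via $ab_\err=ba_\err$, which the paper leaves implicit. The net trade-off: your argument is more elementary and self-contained; the paper's is terser and phrased so as to generalize directly to the matrix (MIMO) setting.
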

\begin{proof}
Since $ ax'+by'=c,$ subtracting this from \eqref{Eq:spe} gives the homogeneous equation
\begin{equation}
a(x-x') + b(y-y')=0.    \label{Eq:homo}
\end{equation}
Thus, we seek all polynomial matrices 
$M\in\HHggd{2}{1}$ satisfying
$$
\begin{bmatrix}
    a & b
\end{bmatrix}
M=0.
$$
Since 
$
\begin{bmatrix}
    a & b
\end{bmatrix}
$
is $1\times 2$ of rank one, $M$ must also have rank one and can be written as
$$
M=Z t
$$
where $Z$ is a full rank $2\times 1$ matrix and $t\in\HHd$ is arbitrary.
Partitioning $Z$ compatibly with 
$
\begin{bmatrix}
    a & b
\end{bmatrix}, 
$
we get
$$
Z=
\begin{bmatrix}
    - b_\err \\ a_\err
\end{bmatrix},
$$
where $a_\err,b_\err$ are right coprime polynomials from \eqref{Eq:lrconv}. Substituting into \eqref{Eq:homo}, we get
$$
x - x' = -b_r t, \quad y - y' = a_r t, \quad 
$$
which completes the proof.
\end{proof}
\begin{The}[Minimal solutions]
If equation \eqref{Eq:spe} is solvable, then among all solutions, there exists one where $x$ has the minimum degree, satisfying
    $$
    \deg x < \deg b_\err.
    $$
Similarly, there exists (possibly another) solution where $y$ has the minimum degree, satisfying
    $$
    \deg y < \deg a_\err.
    $$
\end{The}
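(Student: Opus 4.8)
The plan is to combine the parametrization of all solutions from the \emph{General solution} theorem with a single division-with-remainder step, exploiting that $\HHd$ is a polynomial ring over a skew field in which every nonzero polynomial has an invertible leading coefficient. Throughout I assume $a,b\neq 0$, so that $a_\err,b_\err\neq 0$ and their degrees are the relevant nonnegative quantities.

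First I would record the Euclidean property of $\HHd$ that drives everything. Because $\HH$ is a division ring and the indeterminate $d$ is central, any nonzero $g\in\HHd$ has an invertible leading coefficient $\beta$; hence for every $f\in\HHd$ there exist unique $t_0,r\in\HHd$ with $f = g\,t_0 + r$ and $\deg r < \deg g$. Here the quotient $t_0$ sits on the \emph{right} of $g$: at each step one cancels the leading term $f_k d^k$ of the current remainder by subtracting $g\cdot(\beta^{-1}f_k\,d^{\,k-\deg g})$, and centrality of $d$ guarantees this product has leading term $f_k d^k$, so the degree strictly drops and the process terminates.

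Next, for the $x$-minimal solution I would start from any particular solution $(x',y')$ and the general solution $x = x' - b_\err t$, $y = y' + a_\err t$. Applying the division step with $f=x'$ and $g=b_\err$ gives $x' = b_\err t_0 + r$ with $\deg r < \deg b_\err$; choosing the free parameter $t=t_0$ yields $x = x' - b_\err t_0 = r$, so $\deg x = \deg r < \deg b_\err$. For the $y$-minimal solution I would instead divide $y'$ on the left by $a_\err$, writing $y' = a_\err s_0 + \rho$ with $\deg\rho < \deg a_\err$, and pick $t=-s_0$, so that $y = y' + a_\err(-s_0) = \rho$ and $\deg y < \deg a_\err$. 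This is in general a different value of the parameter $t$ than the one used for $x$, which is exactly why the two minimal solutions need not coincide.

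The main obstacle is careful bookkeeping of the side of multiplication: since $b_\err$ and $a_\err$ multiply the parameter $t$ on the left, the reduction must be a \emph{left} division (divisor on the left, quotient on the right), and one must verify that centrality of $d$ makes this one-sided division lower the degree at each step despite noncommutativity of the coefficients. Once the correct-sided Euclidean division is in place, the degree bounds are immediate; moreover, uniqueness of the remainder shows the minimal-degree $x$ (respectively $y$) is in fact unique, although only existence is asserted in the statement.
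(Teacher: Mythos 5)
Your proposal is correct and follows essentially the same route as the paper: take a particular solution, divide $x'$ by $b_\err$ (divisor on the left, quotient on the right), and absorb the quotient into the free parameter $t$ so that $x$ becomes the remainder, with the analogous division of $y'$ by $a_\err$ for the second claim. The extra justification you supply for why this one-sided Euclidean division exists in $\HHd$ (invertible leading coefficients, centrality of $d$) is a detail the paper leaves implicit but does not change the argument.
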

\begin{proof}
Let $x',y'$ be any particular solution.
Performing right polynomial division on $x'$ by $\tilde b_\err,$ we obtain 
\begin{equation}
   x'= b_\err q + r    ,             \label{Eq:quotientreminder}
\end{equation}
where $q$ is the quotient and $r$ is the reminder, with $\deg r  <  \deg b_\err.$
Substituting $q$ as the parameter $t$ in the general solution
$$
x = x' - b_\err t = x' - b_\err q .
$$
Since (\ref{Eq:quotientreminder}) implies  $x' - b_\err q = r,$ we get x = $r,$ which has the minimal possible degree.
The construction for minimizing 
$\deg y$ follows the same reasoning.
\end{proof}

\section{DESIGN PROCEDURE}
We use quaternionic polynomial equations to place the zeros of a closed-loop transfer function denominator in specified similarity classes. Specifically, we design an output feedback controller so that the denominator polynomial has right zeros similar to desired values—mirroring classical pole placement in a quaternionic setting.
\begin{Algo}[Design procedure]
\mbox{}\\
\textbf{1) Construct the desired denominator:}
Given the desired positions  $r_i\in\HH$, form the polynomial
\begin{equation}
    c(d)= (d-r_1)(d-r_2)\ldots  \; .
\end{equation}
\textbf{2) Compute the plant's transfer function:}  
If a state-space model \eqref{Eq:ssmodel} is given, use Algorithm 1 to compute the left polynomial fraction
\begin{equation}
    S = a_{\ell}^{-1} b_{\ell} =  J + d H (I_n-d F)^{-1}G.  \label{Eq:tf4}
\end{equation}
\textbf{3) Solve the polynomial equation:}  
Form the polynomial equation:
\begin{equation}
    a_\ell p_\err +b_\ell q_\err = c 
    \label{Eq:spe2}
\end{equation}  
and solve it for unknown polynomials $p_\err $ and $ q_\err .$\\
\textbf{4) Construct the controller transfer function:}  
\begin{equation}
    R =  q_\err p_\err^{-1}
\end{equation}
\textbf{5) Obtain the controller's state-space model:}  
If needed, compute its realization using Theorem \ref{The:ssreal}.\\

With this controller, the closed-loop transfer functions become
\begin{equation}
    T_w = p_\err c^{-1}a_\ell =  g^{-1}h_w, \quad
    T_v = p_\err c^{-1}b_\ell = g^{-1}h_v .
\end{equation}
Here, $c$ and $g$ have right zeros similar to the prescribed $ r_i$'s.
\end{Algo}
Unlike in classical systems, zeros cannot be placed exactly at the desired positions in $\HH$, only within their similarity classes. However, this is sufficient, as all elements in a similarity class share key properties. They are all either stable or unstable.

\begin{Ex}
Consider a plant defined by the state-space model  \eqref{Eq:ssmodel} with
\begin{equation*}
    F = 
    \begin{bmatrix}
    1 & \i \\ \j  & \k 
    \end{bmatrix},\;
    G = 
    \begin{bmatrix}
    \i \\ 0 
    \end{bmatrix},\;
    H = 
    \begin{bmatrix}
    1 & 0 
    \end{bmatrix},\;
    J = 0 .
\end{equation*}
The standard right eigenvalues of  $F$ are 
 $\lambda_1=1.4+0.37\i, \lambda_2= -0.37+1.4\i$ both having 
 $ |\lambda_1|= |\lambda_2| =  1.4142 > 1$  indicating instability.

Following the outlined design steps:\\
\textbf{1)} Select stable values $r_1 = 3, r_2 = 4$ in $d=\zi$,  forming the polynomial
$$
    c(d)= (d-3) (d-4) = 12 - 7d + d^2 .
$$
The choice of real zeros is possible but not necessary. It will have interesting consequences. \\
\textbf{2)} Using Algorithm 1, the left polynomial fraction representation of \eqref{Eq:tf4}  gives
\begin{align*}
a_\ell(d) &= 1 - (1-0.31\i-0.89\j-0.35\k)d \\
& \hspace{20mm}- (0.61\i+1.8\j+0.7\k)d^2 ,\\
b_\ell(d) &=  d + (0.31i+0.89j+0.35k)d^2 .
\end{align*}
The right zeros of $a_\ell$ are computed as
 $ z_1^{-1} =-0.37-0.42\i-1.2\j-0.48\k, \; 
 z_2^{-1} = 1.4+0.11\i+0.32\j+0.13\k$ which are similar to the reciprocals of  $\lambda_1, \lambda_2$, confirming correctness.\\
\textbf{3)} Solving \eqref{Eq:spe2} yields
\begin{align*}
p_\err &= 12 - (11-5.8\i-17\j-6.6\k)d ,\\
q_\err(d) &= 30\i-14\j+10\k - (37\i -19\j+15\k )d   .
\end{align*}
This defines the controller transfer function
$R = q_\err p_\err^{-1}.$

Expressing the closed-loop transfer function from $w$ to $y$ as a left fraction  $g^{-1}h_w$ we obtain
\begin{align*}
g(d) &= 0.095-0.12\i-0.36\j-0.14\k \\
    &\hspace{8mm} - (0.056-0.073\i-0.21\j-0.083\k)d \\
    &\hspace{8mm}  + (0.008-0.01\i-0.03\j-0.012\k)d^2 , \\
h_w(d) &= 0.095-0.12\i-0.36\j-0.14\k \\
   & \hspace{8mm} + (0.87+0.31\i+0.91\j+0.36\k)d \\
   &\hspace{8mm}  - (1.9+0.048\i+0.14\j+0.055\k)d^2 \\
   &\hspace{4mm}  + (1.1-0.34\i-0.99\j-0.39\k)d^3   .
\end{align*}

Despite being a quaternionic polynomial, 
$g$ has real zeros 3 and 4, achieving exactly the desired placement.

The resulting closed-loop state-space model is
\begin{align*}
    F_c &= 
    \begin{bmatrix}
      0  &   1    &     0  \\ 
      0  &   0   &      1   \\
     0  &  -0.083   &  0.58
    \end{bmatrix}   ,
   & G_c &=
    \begin{bmatrix}
    0 \\ 0 \\ 1
    \end{bmatrix}   , \\[1mm]
     H_c &= 
    \begin{bmatrix}
    3.2+0.56\i+1.6\j+0.64\k  \\  -0.75-1.4\i-4\j-1.6\k    \\ -1.3+0.79\i+2.3\j+0.9\k 
    \end{bmatrix}^T ,
    &J_c &= 1   .
\end{align*}
Since $c(d)$ was chosen as real in this example,
$F$ with eigenvalues $\lambda_1(F_c) =  0.33,
  \lambda_2(F_c) =    0.25, \lambda_3(F_c) =  0
$. As expected $\lambda_1(F_c) = 1/r_1,$ $\lambda_2(F_c) = 1/r_2,$ The third eigenvalue is zero since 
$c(d)$ was selected with a degree one less than the closed-loop system order.

This example shows how quaternionic polynomial equations enable pole placement in quaternion feedback systems; Figure 1 illustrates the closed-loop response.
\begin{figure}[thpb]
      \centering
      \includegraphics[scale=0.15]{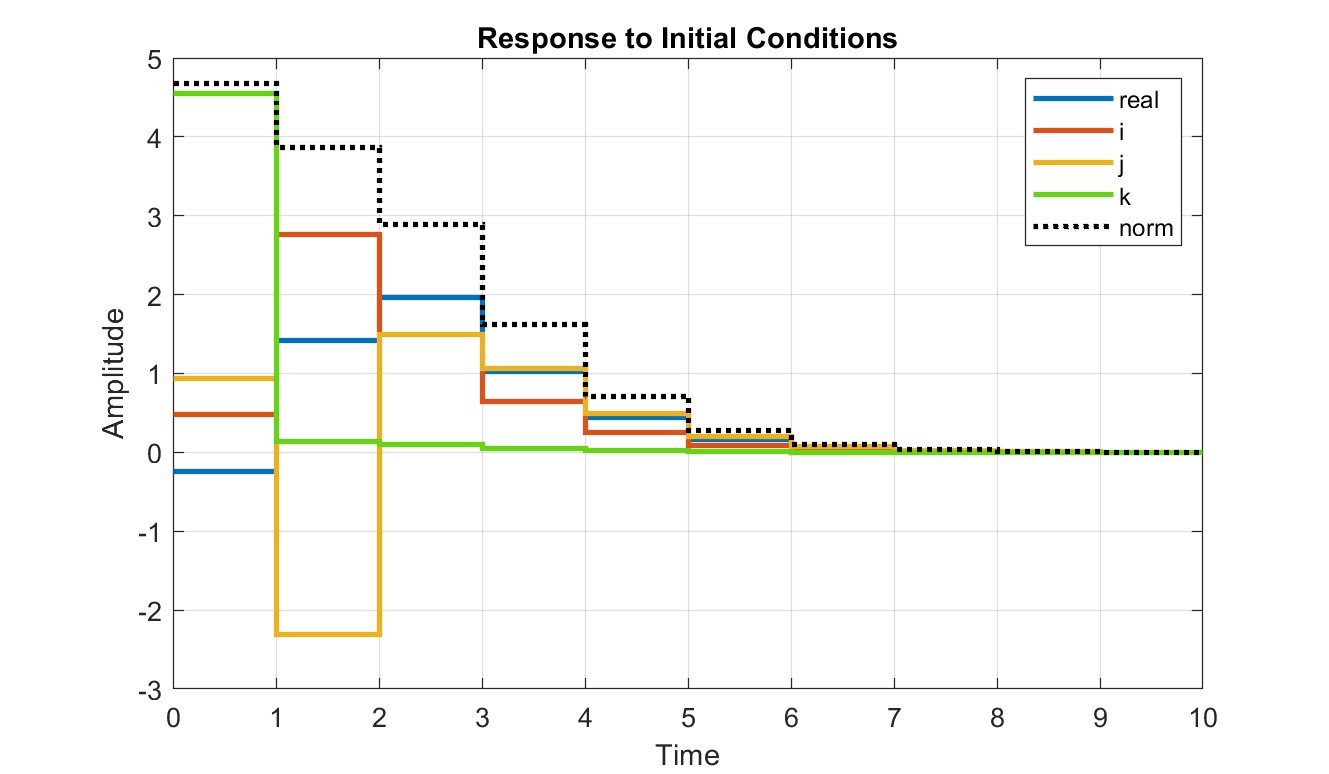}
      \caption{Feedback system response to a random initial state.}
      \label{figurelabel}
\end{figure}
\end{Ex}
\addtolength{\textheight}{-12cm}   


\end{document}